\def\ready{}

\ifdefined\ready
\documentclass[sigplan,nonacm]{acmart}
\else
\documentclass[sigplan,anonymous,review,nonacm]{acmart}
\fi

\usepackage{preamble}

\newcommand{\numBenchmarks}{35}

\newcommand{\numSuiteQubitsLo}{12}

\newcommand{\numSuiteQubitsHi}{100}

\newcommand{\numSuiteTGatesMaxK}{186}

\newcommand{\numDascotCoverage}{16}

\newcommand{\numMenuDensestTiles}{1.1}

\newcommand{\numMenuFastestSpaceMultiple}{3}

\newcommand{\numMenuSlowdownDensestConfig}{24}

\newcommand{\numSoleShare}{86}

\newcommand{\numSoleCount}{30}

\newcommand{\numTightChipRushhour}{29}

\newcommand{\numTightChipStatic}{13}

\newcommand{\numSuiteBudgetRushhour}{1.8}

\newcommand{\numSuiteBudgetBaseline}{2.1}

\newcommand{\numMinChipLo}{1.2}

\newcommand{\numMinChipHi}{3.5}

\newcommand{\numSpeedupTight}{7.2}

\newcommand{\numSpeedupBandLo}{2.3}

\newcommand{\numSpeedupBandHi}{7.2}

\newcommand{\numOutrightBestShare}{94}

\newcommand{\numMaxGapToBest}{1.2}

\newcommand{\numMatchedPuremagic}{0.9}

\newcommand{\numMatchedLo}{2.2}

\newcommand{\numMatchedHi}{9.2}

\newcommand{\numPbcLeadOthreels}{2.0}

\newcommand{\numPbcLeadPuremagic}{2.0}

\newcommand{\numPbcLeadCountPuremagic}{26}

\newcommand{\numPbcLeadLitinski}{2.1}

\newcommand{\numPuremagicFloorRatio}{0.64}

\newcommand{\numNoPbcReplay}{6.1}

\newcommand{\numQubitsMatchedLo}{1.3}

\newcommand{\numQubitsMatchedHi}{5.8}

\newcommand{\numQubitSecondsLo}{2.8}

\newcommand{\numQubitSecondsHi}{13}

\newcommand{\numLogicalVolumeBest}{1.4}

\newcommand{\numLogicalVolumeNext}{2.1}

\newcommand{\numQubitSecondsAdvantage}{2.3}

\newcommand{\numQubitSecondsAdvantageMax}{10}

\newcommand{\numBoundSpaceRushhour}{1.21}

\newcommand{\numBoundTimeRushhour}{3.99}

\newcommand{\numBoundTimeLsqca}{44}

\newcommand{\numBoundSpacePuremagic}{1.78}

\newcommand{\numBoundTimePuremagic}{2.67}

\newcommand{\numBoundGapGeomean}{4.8}

\newcommand{\numBoundGapMax}{9.8}

\newcommand{\numBoundGapOthersLo}{13}

\newcommand{\numBoundGapOthersHi}{55}

\newcommand{\numFrontierRushhour}{22}

\newcommand{\numFrontierBaselineLo}{2}

\newcommand{\numFrontierBaselineHi}{5}

\newcommand{\numFamilies}{13}

\newcommand{\numFamiliesCheapest}{8}

\newcommand{\numFamiliesWithinTwoPercent}{2}

\newcommand{\numFrozenRemoved}{46}

\newcommand{\numFrozenRemovedMax}{98}

\newcommand{\numFrozenSurvivesTight}{9}

\newcommand{\numFloorHeadroom}{1.1}

\newcommand{\numFloorTightLo}{3.0}

\newcommand{\numFloorTightHi}{4.9}

\newcommand{\numFloorStaticPooled}{26}

\newcommand{\numTeleportedShare}{97}

\newcommand{\numCultivationsLo}{1.5}

\newcommand{\numCultivationsHi}{8}

\newcommand{\numErrorIdleMax}{88}

\newcommand{\numErrorMagicTight}{11}

\newcommand{\numErrorMergeTight}{2}

\newcommand{\numErrorMergeBound}{5}

\newcommand{\numErrorMotionBound}{0.2}

\newcommand{\numDynamismGatesPerReshapeTight}{140}

\newcommand{\numDynamismHadamardsPerRotationTight}{1{,}000}

\newcommand{\numDynamismMotionShareTight}{0.14}

\newcommand{\numSensitivityFastMatchedLo}{1.2}

\newcommand{\numSensitivityFastMatchedHi}{10.9}

\newcommand{\numSensitivitySlowMatchedLo}{0.8}

\newcommand{\numSensitivitySlowMatchedHi}{5.4}

\newcommand{\numSensitivityBoundGap}{4.8}

\newcommand{\numSensitivityBoundGapDoubleMotion}{5.2}

\newcommand{\numSensitivityWalkingShift}{2.2}

\newcommand{\numPolicyTeleportTimeMax}{1.6}

\newcommand{\numPolicyTeleportReshapeMax}{613}

\newcommand{\numPolicyLookaheadReshapeMax}{10}

\newcommand{\numPolicyPlacementMax}{1.3}

\newcommand{\numSweepCompiles}{966}

\newcommand{\numSweepCpuMinutes}{64}

\newcommand{\numCompileMedianSeconds}{2.2}

\newcommand{\numEpsilonGrantMedian}{12}

\newcommand{\numTacoFairnessFloor}{1.77}

\begin{document}
\title{RushHour: A Dynamically Reconfigurable Lattice-Surgery Architecture}

\author{Nathaniel Tornow, Aleksandra \'{S}wierkowska, Peter Wegmann, Pramod Bhatotia}
\affiliation{\institution{Technical University of Munich}\country{}}

\begin{abstract}
Practical fault-tolerant quantum computing (FTQC) requires efficient lattice surgery (\LS), so that large algorithms fit on resource-constrained quantum chips.
Existing approaches, however, are rigid: qubits, routing space, and resource states are allocated ahead of execution, which prevents running on small chips, leaves statically scheduled executions with large time overheads, and fixes each design at a single area of the space--time trade-off.

We present \textit{dynamic \LS}, which enables efficient reconfiguration of the ancilla space, just-in-time allocation of resource states, and dynamic rotations of logical qubits, thereby spanning the entire space--time trade-off with a single, unified approach.
We realize dynamic \LS with \rush{} through a hardware-compiler co-design: the \emph{\rush{} ISA} formalizes and programs our dynamic lattice model, the \emph{Lattice Management Unit} abstracts dynamic lattice management and performs efficient lattice reconfiguration, and the \emph{\rush{} Compiler} compiles logical circuits for physical chips into optimized ISA programs while pipelining instructions.

We evaluate \rush{} against six state-of-the-art compilers and two resource models.
On the smallest chips, $\numSoleShare\%$ of benchmarks run only with \rush{}, while existing approaches require $\numMinChipLo$--$\numMinChipHi\times$ larger chips.
On space-constrained early-FTQC chips, \rush{} runs a median $\numSpeedupBandLo$--$\numSpeedupBandHi\times$ faster than the best feasible alternative, while matching the state of the art on very large chips.
\rush{}'s constructive results run $\numBoundGapGeomean\times$ from an idealized-machine resource limit.
\end{abstract}

\maketitle

\section{Introduction}
\label{sec:introduction}

\begin{figure}[t]
    \centering
    \includegraphics[width=\columnwidth]{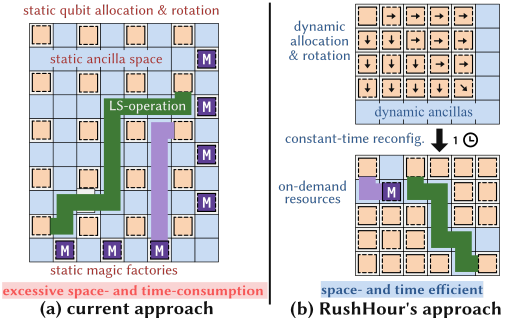}
    \caption{\textbf{Static vs. dynamic lattice surgery.} (a) Current architectures fix qubit allocation, ancilla space, and magic-state factories ahead of execution. (b) \rush{} allocates, rotates, and reconfigures dynamically and on demand.}
    \label{fig:comparison}
\end{figure}

Realizing the potential of quantum computers requires quantum error correction (QEC) to protect computations against the high error rates inherent in physical quantum hardware~\cite{acharya2025below}.
The leading approach to QEC is the \textit{surface code}, which encodes logical qubits into two-dimensional patches of noisy physical qubits and can suppress errors exponentially in code size~\cite{fowler2012surface,acharya2025below}. However, even a single logical qubit may require hundreds or thousands of physical qubits, making efficient use of the available hardware essential~\cite{beverland2022assessing}.

Logical operations in the surface code are implemented using \textit{lattice surgery} (\LS)~\cite{litinski2019game} by temporarily merging and splitting neighboring patches to perform joint logical measurements. \LS operations require additional \textit{ancilla patches} to connect data patches and produce resource states~\cite{litinski2019game,fowler2018lowoverhead}.

The size and arrangement of this ancilla space determine whether and how efficiently a fault-tolerant program can execute~\cite{flasq}. With limited ancilla space, a computation can fit on a smaller chip, but logical operations must be serialized. With more ancilla space, operations can proceed in parallel, reducing execution time at the cost of a larger physical-qubit footprint that might not fit a physical chip.

An \LS architecture must therefore balance two competing objectives:
\begin{inparaenum}[(a)]
\item \textit{space efficiency}, to execute large quantum circuits on devices with a constrained number of physical qubits, and
\item \textit{time efficiency}, to minimize exposure to logical errors and complete the computation as quickly as possible.
\end{inparaenum}

However, current \LS architectures are \textbf{fundamentally limited} due to the \textbf{static} allocation and configuration of the lattice \textit{ahead of execution} (\figref{comparison})~\cite{litinski2019game,molavi2025dependency,watkins2024high,zhu2026o3ls,kobori2025lsqca,hofmeyr2025scheduling,wang2024taco}:

\noindent
\textbf{First}, static approaches cannot feasibly run on small physical chips, as they must statically reserve ancillas to ensure every gate remains routable (\figref{limits})~\cite{kobori2025lsqca,watkins2024high,zhu2026o3ls,litinski2019game}.

\textbf{Second}, the current approach imposes a large, unnecessary time overhead: \LS{} operations serialize as gates queue at statically placed factories, fixed compute regions, or routing buses, while other free patches remain idle~\cite{litinski2019game,kobori2025lsqca}.
Each qubit's patch orientation is also fixed before execution, so an operation needing a specific boundary of a qubit may first pay an expensive patch rotation~\cite{litinski2019game,zhu2026o3ls}.

\textbf{Third}, no existing approach is both space- and time-efficient, and each occupies only a small, suboptimal area of the space--time trade-off, as current approaches build on floor plans fixed before execution, whether hand-designed or selected per workload~\cite{ghosh2026workload}. The densest approach uses $\numMenuDensestTiles$ tiles per qubit but runs a median $\numMenuSlowdownDensestConfig\times$ slower than the fastest approach, which in turn requires over $\numMenuFastestSpaceMultiple\times$ the space.

We argue that overcoming these limitations requires the lattice to be managed dynamically as the computation unfolds. This leads to our main research question: 

\researchquestion{Research Question}{
    How can we design an \LS architecture that is dynamically configurable during execution to achieve space- and time-efficient FTQC across all chip sizes?
}
\medskip

Our \textit{dynamic \LS} model (\figref{comparison}) enables efficient ancilla relocation within a single round, so that a few ancilla resources can be reused across operations, reducing the required number of ancillas and, consequently, the required chip size by $\numMinChipLo$--$\numMinChipHi\times$.
On-demand allocation of resource states enables ready gates to execute without waiting for statically reserved resources, while patch
orientations become a virtually free state of the dynamic lattice, enabling time-efficient execution of the circuit.
More broadly, dynamic lattice surgery provides a unified \LS{} model across the space--time trade-off: under tight area constraints, it reuses lattice tiles over time, whereas any additional area can be exploited to expose greater parallelism.

\begin{figure}[t]
    \centering
    \includegraphics[width=\columnwidth]{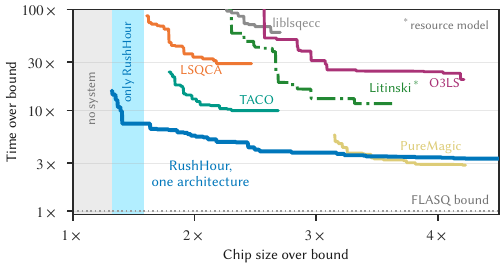}
    \caption{\textbf{Best runtime by chip size}, normalized by the
    FLASQ bound~\cite{flasq}.
    }
    \vspace{-3mm}
    \label{fig:limits}
\end{figure}

\begin{figure*}[t]
    \centering
    \includegraphics[width=\textwidth]{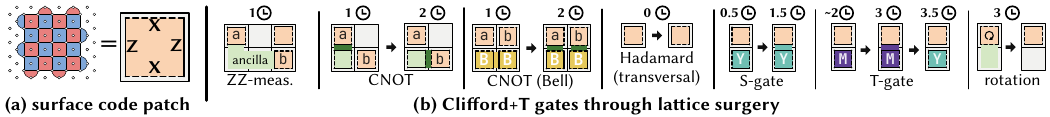}
    \caption{\textbf{Surface-code patches and \LS{} operations}~\cite{litinski2019game,flasq}. All times are in logical rounds. \texttt{M}: a cultivated magic state~\cite{gidney2024magic} (occupying its tile for $2$--$3$ expected rounds), \texttt{Y}: the $|Y\rangle$ ancilla~\cite{gidney2023inplace}. Dashed edges mark the $X$ boundary.}
    \label{fig:ls}
\end{figure*}

\myparagraph{Technical challenges}
Realizing dynamic lattice surgery presents several challenges:
First, we must efficiently represent and carefully reason about a lattice whose geometric configuration and allocations can change over time, thereby providing a compilation target for valid instructions.
Second, dynamic reconfiguration must not become a performance bottleneck: both the latency of each reconfiguration and the total number of reconfigurations must be minimized to prevent dynamic execution from stalling the computation.
Third, dynamic reconfigurability and the flexibility to allocate resources at runtime substantially increase compilation complexity, further complicating an already NP-hard optimization problem~\cite{herr2017lattice,siraichi2018qubit}.
Finally, dynamic reconfigurations and allocations could introduce unnecessary latency if not scheduled carefully off the critical execution path.

\myparagraph{Our approach: Architecture-compiler co-design}
To address these challenges, we present \rush{}, a dynamically reconfigurable \LS{} architecture comprising three core components:
\begin{inparaenum}[(1)]
\item The \textit{\rush{} ISA} formalizes dynamic \LS{} by exposing lattice geometry and resource allocations as machine state, specifying legal state transitions and available gate operations to the compiler, and enabling non-blocking single-round reconfigurations by construction.
\item The \textit{Lattice Management Unit} (\amu{}) maintains and abstracts this state by providing a minimal interface for realizing gates and reshaping the lattice when the current configuration cannot support them. Its mechanisms are designed to minimize both reconfiguration latency and frequency.
\item The \textit{\rush{} compiler} schedules operations, reconfigurations, and resource allocations by pipelining operations to shift their overhead off the critical path, and searches for Pareto-optimal configurations that balance execution time and resource usage for a given circuit and chip.
\end{inparaenum}

\myparagraph{Results}
We implement \rush{} and evaluate it on \numBenchmarks{} representative benchmarks against six competing architectures and compilers, and two resource models (\secref{evaluation}).
At \rush{}'s minimum feasible chip, $\numSoleShare\%$ of benchmarks run only with \rush{}, while existing approaches require $\numMinChipLo$--$\numMinChipHi\times$ larger chips.
On space-constrained chips, \rush{} reaches a successful shot up to a median $\numSpeedupTight\times$ sooner than the best feasible alternative and, at $2\times$ the physical minimum, delivers the outright best result on $\numOutrightBestShare\%$ of benchmarks.
In total space--time resource cost, \rush{} runs $\numQubitSecondsLo$--$\numQubitSecondsHi\times$ cheaper than every design except the spacious-chips-only PureMagic, with which it is on par.

\myparagraph{Contributions} We make the following contributions:
\begin{enumerate}[leftmargin=5mm]
    \itemsep0em
    \item We introduce \emph{dynamic \LS{}}, an execution model that makes the lattice a dynamic machine state. A dynamic ancilla corridor reconfigures in a single round by construction, resource states are allocated just in time, and qubit orientations become a tracked property of the lattice (\secref{machine}).
    \item We realize dynamic \LS{} through the \rush{} ISA, which exposes configurations, transitions, and gate realizations as a compilation target, and the Lattice Management Unit (\amu{}), which abstracts the complexity and efficiently manages the dynamic lattice (\secref{machine}, \secref{amu}).
    \item We build the \rush{} compiler, which schedules gates, dynamic reconfigurations, and state preparations while pipelining their overhead off the critical path, thereby producing a circuit's Pareto-optimal space--time trade-off for a given physical chip (\secref{compiler}).
\end{enumerate} %
\section{Background}
\label{sec:background}

\subsection{Quantum Error Correction with the Surface Code}
Useful quantum computing requires quantum error correction (QEC) to suppress the errors of physical qubits~\cite{acharya2025below}.
The surface code is the leading QEC code for two-dimensional nearest-neighbor hardware: a distance-$d$ patch encodes one logical qubit in a $d{\times}d$ array of data qubits with interleaved $X$- and $Z$-stabilizer ancillas, occupying $2(d{+}1)^2$ physical qubits per tile~\cite{fowler2012surface,flasq}.
Its logical error rate per cycle falls exponentially with distance $d$ below threshold~\cite{beverland2022assessing}, so raising $d$ trades physical qubits for exponentially lower error.
We abstract a patch as one tile of the chip (\figref{ls}a). Its four edges carry two $Z$- and two $X$-boundaries on opposite pairs, which define the logical operators and how patches interact~\cite{litinski2019game}.

\subsection{Lattice Surgery and Gates}

We implement logical fault-tolerant operations on the surface code using \textit{lattice surgery} (\LS), which enables two-qubit Pauli-product measurements of $XX$ and $ZZ$ operators via the merging and splitting of surface code patches~\cite{horsman2012lattice,litinski2019game,fowler2018lowoverhead}. Each logical merge takes the time of $d$ surface-code cycles of duration $t_{\mathrm{cyc}} \approx 1\,\mu\mathrm{s}$ each on superconducting hardware~\cite{flasq}. Throughout, one \emph{logical round} is $d$ QEC cycles (${\sim}d\,\mu$s). A merge takes 1 logical round, and splitting takes 0 rounds~\cite{litinski2019game}.
As the example in \figref{ls}b shows, we can perform a $ZZ$-measurement between two qubits $a$ and $b$ by merging and splitting them through ancilla space between the two qubits.
Using \LS{}, we can implement the Clifford+$T$ gate set:

A \CNOT{} gate is implemented as a sequence of two Pauli-product measurements on an ancilla qubit $c$ prepared in $\ket{+}$: a $ZZ$-measurement between the control and $c$, followed by an $XX$-measurement between $c$ and the target, which takes 2 logical rounds in total~\cite{horsman2012lattice,litinski2019game}.
A \CNOT{} can also execute through a prepared Bell pair of qubits $b_1, b_2$: a $ZZ$ merge of the control with $b_1$ and an $XX$ merge of $b_2$ with the target act on different patches through disjoint regions, so both run concurrently, and the gate completes in one logical round instead of two, given that the pair is prepared ahead of time~\cite{litinski2019game,fowler2018lowoverhead}.
The outcomes leave only Pauli byproducts that are tracked in software~\cite{litinski2019game}.

A Hadamard gate is applied transversally in 0 logical rounds. It exchanges the patch's $X$- and $Z$-boundaries~\cite{horsman2012lattice}, and restoring the original orientation requires a patch rotation through an adjacent ancilla tile in 3 logical rounds~\cite{litinski2019game}.

An $S$/$S^\dagger$ gate is implemented via a $ZZ$-measurement on an ancilla qubit in the $\ket{Y}$ state, which can be prepared in $0.5$ rounds~\cite{gidney2023inplace, flasq}.

A $T$ gate consumes a magic state $\ket{m} = \ket{0} + e^{i\pi/4}\ket{1}$ through a $ZZ$-measurement. Depending on the measurement outcome, we must apply a conditional $S$-gate correction~\cite{bravyi2005universal}.
With magic-state cultivation~\cite{gidney2024magic}, we can produce $\ket{m}$ in a single ancilla tile in approximately $2$ logical rounds.

\subsection{Walking Qubits}
\label{subsec:walking-qubits}
Walking qubits allow us to slide surface-code patches across the lattice in straight or diagonal directions by shifting the stabilizer-measurement schedule by one lattice site per QEC cycle~\cite{mcewen2023relaxing}, a primitive already demonstrated in hardware~\cite{eickbusch2025dynamic}. McEwen et al.~\cite{mcewen2023relaxing} establish walking at one patch width per two logical rounds~\cite{flasq}, while a single-round schedule is conjectured from the $Y$-basis construction of~\cite{gidney2023inplace} but not yet demonstrated as a fault-tolerant protocol~\cite{flasq}. We assume the one-round rate (one patch width per logical round) throughout, while also reporting every aggregate at the proven two-round rate as a sensitivity arm (\tabref{sensitivity}), where no result moves by more than $\numSensitivityWalkingShift\%$.

\figref{walking-qubits} shows the primitive sliding many qubit patches in parallel to reconfigure ancilla space. We show two examples of reconfiguring the blue ancilla space into distinct shapes within a single round. This parallel qubit sliding forms the core mechanism for efficient dynamic reconfiguration in this work.

\begin{figure}[t]
    \centering
    \includegraphics[width=1\linewidth]{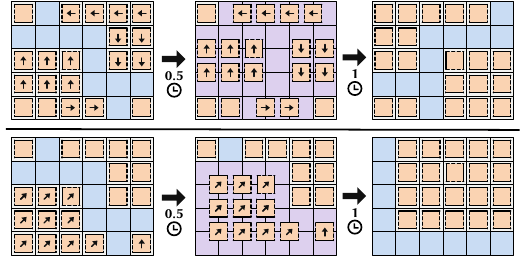}
    \caption{\textbf{Walking qubits.} Parallel patch slides reshape the free
    space in one round~\cite{mcewen2023relaxing,flasq}.}
    \label{fig:walking-qubits}
\end{figure}

Because the slide is realized by ordinary (reconfigured) stabilizer cycles rather than a logical operation, it runs at the resting error rate and implements the logical identity~\cite{mcewen2023relaxing}.

\section{\rush{} Overview}
\label{sec:overview}

\figref{overview} shows an overview of \rush{}, which we divide into three main components:

(1) The \textbf{\rush{} ISA} models the dynamic lattice as a mutable machine state with valid configurations, reconfigurations, allocation transitions, and gate realizations, serving as our execution model. 
Our dynamic lattice enables fast and rare reconfigurations by construction, since dynamic ancillas are arranged as a connected staircase corridor between patches, making any valid corridor shape one round of walking away from another, regardless of distance.

(2) The \textbf{Lattice Management Unit (\amu{})} abstracts the complexity of realizing circuits on the dynamic lattice through a simple interface.
To do so, the translation table stores the current state of the dynamic lattice, the access engine computes \LS{}-realizations of logical gates under the current lattice configuration, and the reconfiguration engine efficiently reshapes the lattice when a gate cannot execute in the current state of the lattice, doing so in a single round while serving as many gates as possible.

(3) To compute an optimized realization of a given circuit on a given physical chip using the dynamic lattice model, the \textbf{\rush{} Compiler} efficiently schedules transitions and gates while pipelining dynamic reshapes and state preparations with other gates.
Because a single compilation is cheap, we compile, for the target chip, a sweep of candidate regions and placements it admits, price each at every code distance it affords, and return the dominant operating points.

\begin{figure}
    \centering
    \includegraphics[width=1\linewidth]{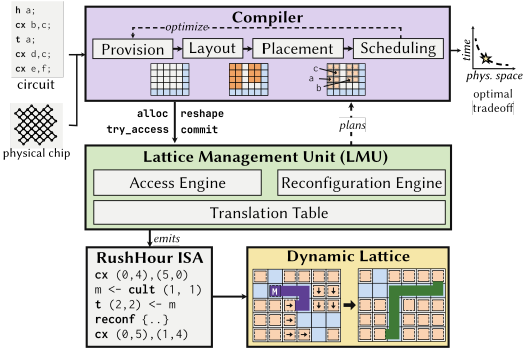}
    \caption{\textbf{\rush{} overview.} The compiler (\secref{compiler}) schedules against the \amu{} interface (\secref{amu}). The \amu{} manages the dynamic lattice and realizes gates as \rush{} ISA instructions (\secref{machine}).}
    \label{fig:overview}
\end{figure}

\begin{figure*}[t]
    \centering
    \includegraphics[width=1\textwidth]{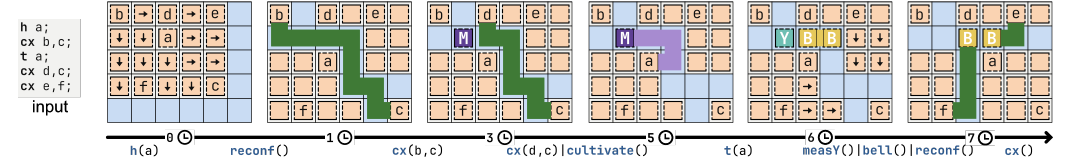}
    \caption{\textbf{Example \rush{} ISA stream.}}
    \label{fig:isa-example}
\end{figure*}

\begin{figure}[t]
    \centering
    \includegraphics[width=\linewidth]{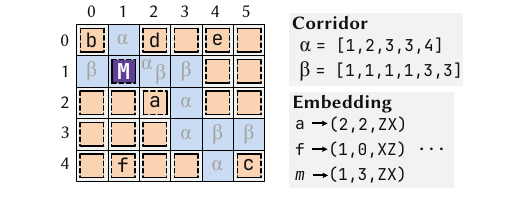}
    \caption{\textbf{Dynamic-lattice configuration $\Sigma$.} The arrays $\alpha,\beta$ place $n+m+1$ tiles, joining into a staircase corridor (blue). The embedding maps each qubit to its slot and orientation.}
    \label{fig:lattice}
\end{figure}

\myparagraph{Example}
\figref{isa-example} shows the compiled \rush{} ISA stream for the logical circuit on the left.
The \texttt{h(a)} gate executes transversally in 0 rounds and flips the boundary orientation of \texttt{a}.
A single-round reconfiguration then reshapes the blue corridor into a configuration that enables the next three gates: \texttt{cx(b,c)} executes in two rounds via the green path through the new corridor, followed by \texttt{cx(d,c)} and, concurrently, an in-place magic-state preparation~\cite{gidney2024magic} (here, two rounds), enabling \texttt{t(a)} via a $ZZ$ measurement in one round, followed by a conditional $Y$-basis correction.
The correction runs in parallel with a second reconfiguration and a Bell pair, enabling the final \texttt{cx(e,f)} to complete in a single round.

\section{Dynamic Lattice Surgery and the \rush{} ISA}
\label{sec:machine}

To implement dynamic lattice surgery, we treat lattice geometry as a mutable machine state.
\rush{} enables this dynamic lattice model through three core properties: (i) the validity of a configuration is a local, checkable condition, (ii) any valid configuration can reach any other in a single one-round transition, and (iii) any free tile can host any resource, allocated dynamically and uniformly across the lattice.

The \rush{} ISA (\figref{ir}) describes all instructions that make up a program on the dynamic lattice, including \emph{transitions}, which can change a configuration, and \emph{gates}, which execute on the standing configuration.

\subsection{The Dynamic Lattice}
\label{subsec:model}
The dynamic lattice is an $n{\times}m$ grid $\region$ of data patches embedded in an $(n{+}1){\times}(m{+}1)$ chip grid $\grid$ leaving $n{+}m{+}1$ free tiles that join into one \emph{staircase corridor} winding between the patches (\figref{lattice}).
We distinguish \emph{slots} from \emph{tiles}: slots index the virtual $n{\times}m$ data grid $\region$, tiles the physical $(n{+}1){\times}(m{+}1)$ chip grid $\grid$.

The machine state is a \emph{configuration}
\begin{equation*}
\Sigma \;=\; (E,\ \alpha,\ \beta,\ A,\ \tau).
\end{equation*}
\begin{itemize}[leftmargin=5mm]
    \itemsep0em
    \item $E$: the \emph{embedding}, mapping each logical qubit to its slot and its boundary orientation $r \in \{\textsc{xz},\textsc{zx}\}$ (\figref{ir}),
    \item $\alpha, \beta$: the \emph{corridor arrays} locating the free space, $\alpha[i]$ being the column where the corridor crosses row $i$, $\beta[j]$ the row where it crosses column $j$,
    \item $A$: the \emph{allocation}, storing transient occupants including cultivated magic states, $|Y\rangle$ states, and prepared Bell pairs,
    \item $\tau$: the \emph{reservations}, recording the round at which each occupied tile frees again.
\end{itemize}

\myparagraph{Displacement}
A patch sits at its home tile unless the corridor has crossed past it on an axis, which displaces it by one tile along that axis, so every patch sits within one tile of its home slot (\figref{lattice}).

\myparagraph{Validity}
A configuration is \emph{valid} iff $(\alpha, \beta)$ are \emph{jointly monotone}: both arrays are nondecreasing, and each unit step of one is matched by a crossing of the other, so the crossings interlock (\figref{lattice} shows a valid pair).
Joint monotonicity ensures that the free tiles chain into a single connected corridor, and every patch sits within one tile of its home slot, in accordance with the displacement rule above.

\myparagraph{Example}
\figref{lattice} shows the configuration of the running example (\figref{isa-example}) at round 3.
The corridor arrays pin one free tile per row and column ($\beta[3] = 1$ leaves tile $(1,3)$ free, $\alpha[2] = 3$ tile $(2,3)$), and the embedding records each qubit's slot and orientation, as shown for \texttt{a} and \texttt{f}, while $A$ holds transient occupants such as the magic state \texttt{\textit{m}}.

\myparagraph{Larger regions}
A chip larger than the $n{\times}m$ patches leaves slots that no qubit binds and that stay permanently free, forming interior free lanes beside the corridor. Gates use this free space for dynamic resource allocation and execution without relying on the corridor, enabling a fluid trade-off between space footprint and time overhead through reconfiguration~(\secref{compiler}).

\subsection{Reconfiguration}
\label{subsec:reconf}
A reconfiguration moves the corridor wherever a gate's operands need it, represented by \isa{reconf} operations that take a set of tiles and the direction each tile's patch should walk.

A reconfiguration between two valid configurations is guaranteed to complete in a single round. \figref{walking-qubits} gives the intuition behind the following lemma:
\begin{lemma}[One-walk reachability]
\label{lem:one-slide}
Any valid configuration lies a single one-tile slide per patch away from any other.
\end{lemma}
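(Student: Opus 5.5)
The idea is to make the displacement rule concrete and show that, in any valid configuration, each patch's position is a fixed function of its home slot and two binary indicators. For slot $(i,j)$ of $\region$, let $\delta_r(i,j)\in\{0,1\}$ record whether the corridor has crossed past the patch along the row axis, i.e.\ whether $\alpha[i]\le j$, so the free tile of row $i$ lies to its left. Let $\delta_c(i,j)\in\{0,1\}$ record the analogous condition for the column axis, $\beta[j]\le i$. By the displacement rule, the patch then occupies tile $(i,j)+(\delta_c,\delta_r)$ of $\grid$. Only the offset vector $(\delta_c,\delta_r)$ depends on the configuration. Joint monotonicity is what makes this well defined: it ensures these positions are pairwise distinct and that the $n{+}m{+}1$ tiles left over are exactly the staircase corridor. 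I would verify this first, by a counting argument row by row and column by column. Each row and each column of $\grid$ contains exactly one free tile per crossing, and the interlocking condition glues the free tiles of consecutive rows and columns into a single connected chain.

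With this parametrization, take two valid configurations $\Sigma=(E,\alpha,\beta,A,\tau)$ and $\Sigma'=(E,\alpha',\beta',A',\tau')$ that share the embedding $E$. Every patch at slot $(i,j)$ moves from $(i,j)+(\delta_c,\delta_r)$ to $(i,j)+(\delta'_c,\delta'_r)$. The displacement is $(\delta'_c-\delta_c,\ \delta'_r-\delta_r)\in\{-1,0,1\}^2$, which is either no move, a straight one-tile slide, or a diagonal one-tile slide. These are exactly the primitives of walking qubits (\secref{background}, \figref{walking-qubits}), and each takes one round. So every patch individually needs at most one one-tile slide, which is the content of the lemma. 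The orientations in $E$ are unchanged, because walking implements the logical identity.

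The main obstacle is showing that all these slides can run \emph{simultaneously} in one round without collisions. A patch's target tile may be the current tile of another patch, and a naive argument would then need intermediate steps. I would address this in two steps. First, note that the map from source tiles to target tiles is injective, since both endpoints are valid placements. Second, argue that walking schedules shift stabilizer measurements cell by cell, so a set of parallel slides is realizable whenever no two patches target the same tile and no two patches swap positions head-on. A head-on swap would require adjacent slots $s$ and $s'$ with $s+\delta=s'+\delta'$ and $s'+\delta'_{\text{new}}=s+\delta_{\text{new}}$. Because the offsets are in $\{0,1\}$ and are monotone along rows and columns (again by the nondecreasing $\alpha,\beta$), the offsets of neighboring slots differ in a consistent direction, and such a swap would force an offset to leave $\{0,1\}$. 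This rules out swaps, and more generally rules out cyclic chains of moves, since each chain of moves into vacated tiles ends at a corridor tile of $\Sigma$.

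Allocations $A$ and reservations $\tau$ only sit on free tiles. I would handle them by requiring that the reconfiguration be scheduled when the transient occupants lie on tiles that remain free in $\Sigma'$, or alternatively by treating them as released before the slide. The lemma concerns the geometry $(\alpha,\beta)$, so this is a side condition rather than part of the core argument.
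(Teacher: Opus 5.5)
\paragraph{Gap in the simultaneity step.} The legality criterion you rely on is that parallel walks are realizable once targets are distinct and no two patches swap head-on. This is not sufficient. A diagonal slide from tile $(r,c)$ to $(r{+}1,c{+}1)$ sweeps, mid-round, the inner corners of the two off-diagonal tiles $(r,c{+}1)$ and $(r{+}1,c)$ of its $2\times 2$ block. It therefore collides with a patch standing on either tile. It also collides with a second diagonal mover crossing the block the other way, $(r,c{+}1)\to(r{+}1,c)$. In both cases the targets are distinct and nothing is swapped.

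Similarly, following another patch into its vacated tile is safe only if that patch leaves in the same direction. If it leaves perpendicularly, the two overlap at the shared corner during the walk. Your argument excludes only two-patch swaps. Your remark that every chain ends at a corridor tile of $\Sigma$ presupposes the acyclicity it is meant to prove.

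These diagonal cases are exactly what the paper's sketch handles. Besides showing that same-line neighbors never approach each other (your swap argument), it uses the interlocking crossings to show that the corridor's free tile lies inside any $2\times 2$ block crossed by a diagonal move. Hence two diagonal movers pass around that tile without contact, and no standing patch sits on a swept corner.

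\paragraph{What is missing.} You need a local case analysis for each $2\times 2$ block containing a diagonal move, and your offset parametrization is well suited to it. For example, suppose slot $(r,c)$ changes offset from $(0,0)$ to $(1,1)$. Its new offset forces $\alpha'[r]\le c$, and its old offset forces $\beta[c]>r$. A patch that stays on tile $(r,c{+}1)$ would be slot $(r,c{+}1)$, $(r{-}1,c{+}1)$ or $(r{-}1,c)$ with unchanged offsets. Each option contradicts one of these two inequalities together with monotonicity of the arrays, and symmetrically for tile $(r{+}1,c)$.

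Crossing diagonals and perpendicular hand-offs need analogous checks. There you should use the interlocking condition wherever plain monotonicity does not suffice. Without this analysis, you have shown that each patch needs at most one one-tile slide, but not that all slides fit into one parallel round, which is what the lemma asserts.
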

\noindent
\emph{Proof sketch.}
Each patch's offsets are read off $(\alpha, \beta)$, so a reconfiguration slides every patch by its offset difference: at most one tile per axis, however far the corridor moves.
Such slides are legal provided each patch stays on the chip and no two patches land on one tile, cross head-on, or sweep a standing patch's corner.
Joint monotonicity never moves same-line neighbors toward each other, so their gap never falls below one tile. The case of two diagonal neighbors crossing one $2{\times}2$ block is guarded by the interlocking crossings that keep the corridor's free tile inside it, so the pair crosses around that corner without any contact. A diagonal slide across a $2{\times}2$ block with a standing patch at its corner is excluded the same way: the interlocking crossings keep the block's free tile between the mover and the standing patch, so the swept corner is free.
No slide waits for another, so the reconfiguration is one round of parallel slides.

\begin{figure}[t]\footnotesize\centering
\setlength{\arraycolsep}{2pt}
\resizebox{\columnwidth}{!}{$
\begin{array}{r@{\;}c@{\;}l@{\quad}l@{\quad}r}
\multicolumn{5}{@{}l}{\textsc{Logical IR}\ \opnote{(Clifford${+}T$)}}\\[1pt]
\multicolumn{5}{@{}l}{\rule{0pt}{1pt}}\\[-6pt]
g &\Coloneqq& \op{H}\,q \mid \op{X}\,q \mid \op{Z}\,q
       \mid \op{S}^{\pm}q \mid \op{T}^{\pm}q
       \mid \op{CX}\,q_c,q_t & & \\[7pt]
\multicolumn{4}{@{}l}{\textsc{\rush{} ISA}} & \opnote{rounds}\\[1pt]
\Pi &\Coloneqq& o^{*} & \opnote{program} & \\
o &\Coloneqq& \op{alloc}\;q@s\!:\!r & \opnote{patch lifecycle} & 0\\
  &\mid& \op{H}\,p \mid \op{X}\,p \mid \op{Z}\,p & \opnote{in-place Clifford} & 0\\
  &\mid& \op{cultivate}\;p & \opnote{magic state} & v_{\mathrm{cult}}(d)^{\ast}\\
  &\mid& \op{prepY}\;p \mid \op{measY}\;y & \opnote{$|Y\rangle$ access} & 0.5\\
  &\mid& \op{reconf}\;\{(p,\delta)^{*}\} & \opnote{slide (one round)} & 1\\
  &\mid& \op{rotate}\;p\;\opvia\;a & \opnote{orientation flip} & 3\\
  &\mid& \op{bell}\;(b_1,r_a),(b_2,r_b)\;\opvia\;\rho & \opnote{Bell pair} & 1\\
  &\mid& \op{cx}\;p_c,p_t \;\opvia\;\rho & \opnote{CX gate} & 2\\
  &\mid& \op{cx}\;p_c,p_t \gets (b_1,b_2) \;\opvia\;\rho_c,\rho_t & \opnote{teleported} & 1\\
  &\mid& \op{T}^{\pm}\;p \gets m \;\opvia\;\rho & \opnote{magic merge} & 1\\
  &\mid& \op{S}^{\pm}\;p \gets y \;\opvia\;\rho & \opnote{$|Y\rangle$ merge} & 1
\end{array}
$}
\caption{\textbf{\rush{} ISA.} The logical Clifford+$T$ input and the ISA the
compiler emits, each operation priced in logical rounds ($^{\ast}$expected
cultivation occupancy, \secref{methodology})~\cite{litinski2019game,gidney2023inplace,flasq}. Variables: logical qubits $q \in \Qset$, data slots $s \in \region$, tile
coordinates $p,m,y,b \in \mathbb{Z}^2$, patch orientation
$r \in \{\textsc{xz},\textsc{zx}\}$, unit slide step $\delta \in \textsc{Dir}_8$,
and path tiles $\rho \subseteq \mathbb{Z}^2$.}
\label{fig:ir}
\end{figure}

\subsection{On-demand State Allocation}
\label{subsec:alloc}
Any free tile in the data region or corridor can host resource states, including magic states, $|Y\rangle$ states, and Bell pairs, created where needed and consumed via a gate instruction.

Because a resource lives in a free tile, it can be prepared ahead of the consuming gate. E.g., a Bell pair can be laid down on idle tiles early, letting the teleported \CNOT{} complete in a single round once its data operands are free (\figref{isa-example}).

\subsection{Instructions}
\label{subsec:isa}
Each state-transition and gate becomes one instruction of \figref{ir}, with the tiles it uses as arguments and a given duration in logical rounds:
\begin{inparaenum}[(1)]
    \item a reconfiguration lowers to \isa{reconf},
    \item an on-demand allocation to \isa{cultivate}, \isa{prepY}, or \isa{bell},
    \item each remaining gate to the merge that consumes its operands.
\end{inparaenum}
The Hadamard gate is a special case, since it is logically a gate, but lowers into a pure transition, costing zero rounds and touching no tile, reducing only to an edit of the orientation $r$ in the embedding $E$.
The lattice catches up on this rotation lazily, as an actual \isa{rotate} only needs to be emitted when a later gate demands a specific boundary exposed to ancilla space.

An instruction runs as soon as its tiles are free, so independent instructions overlap automatically.
Deterministically re-executing the stream while tracking the configuration $\Sigma$ therefore yields the makespan, making the dynamic lattice a target the compiler schedules against directly.      %
\section{The Lattice Management Unit}
\label{sec:amu}

\begin{figure}[t]
    \centering
    \includegraphics[width=\linewidth]{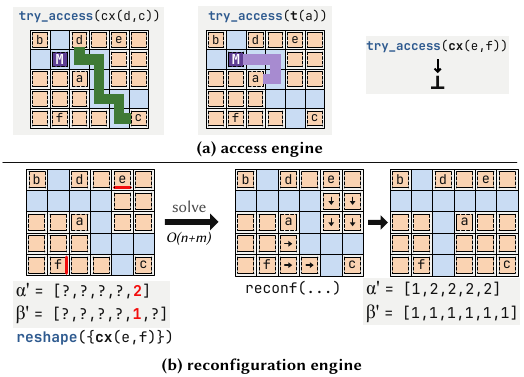}
    \caption{\textbf{\amu{} engines.} (a) The access engine answers
    \amucall{try\_access} with realizations on the lattice.
    (b) A refusal enters the reconfiguration engine: the operation \emph{pins}
    corridor entries (red), and the solver completes them to a full corridor in $\mathcal{O}(n+m)$ time.}
    \label{fig:lmu}
\end{figure}

The Lattice Management Unit (\amu{}) abstracts the complexity of the dynamic lattice to provide \LS realizations of logical gates and efficient reconfigurations through its translation table, access engine, and reconfiguration engine.

\subsection{\amu{} Interface}

The \amu{} exposes a set of four main functions:

\noindent
\textbf{--\amucall{alloc}} allocates a logical qubit onto a virtual slot, registering the qubit in the translation table.

\noindent
\textbf{--\amucall{try\_access}} takes a logical gate $g$ and the set $\claimed$ of claimed tiles and returns the realizations of $g$ on the current configuration that avoid $\claimed$, or $\bot$ if none exist.

\noindent
\textbf{--\amucall{reshape}} takes the current gate frontier $\frontier$ of ready gates and computes a reconfiguration that serves its most critical blocked gate, greedily batching further blocked gates into the same solve.

\noindent
\textbf{--\amucall{commit}} emits ISA operations into the final instruction stream.

\subsection{Access Engine}
The access engine implements the \amucall{try\_access} function to return realizations of a given logical gate $g$.
To do so, it first looks up the physical locations and orientations of the logical qubit operands in the translation table.
Then, it enumerates possible realizations of the gate using breadth-first search, finding valid paths through unoccupied space in the lattice.
\figref{lmu}~(a) shows three example calls for the lattice state of \figref{lattice}. Access for \texttt{cx(d,c)} and for \texttt{t(a)} each returns a single realization with the shown ancilla paths.
Access for \texttt{cx(e,f)} returns $\bot$: the standing configuration offers no realization because no corridor connects \texttt{e} and \texttt{f}.

The access engine also determines where to allocate resource states.
For a $T$ or $S$ gate, it collects up to eight candidate tiles in nearest-first order through the free space, starting from the consumer's $Z$-boundary, and picks the one that allows the gate to start earliest.
Bell pairs for teleported \CNOT{}s follow the same principle: candidate sites nearest the operands' boundary entries are scored by earliest consumption.

\subsection{Reconfiguration Engine}

The reconfiguration engine implements the \amucall{reshape} function to enable the corridor to be reconfigured to another location in a single round.
\figref{lmu}~(b) shows an example of a reshape between two configurations, enabling the previously infeasible \texttt{cx(e,f)}.

\myparagraph{Resolving the corridor}
A blocked operation requires free ancillas at specific boundaries of its operands, each demand fixing one entry of the completion as a \textit{pin} (red).
The corridor solver takes such a pin set and either extends it to a full valid completion $(\alpha', \beta')$ if one exists or reports $\bot$.

To do so, it maintains one feasible interval per entry of $\alpha'$, derived from joint monotonicity, tightens the intervals by every pinned crossing of either array, and propagates the bounds forward and backward. The array $\beta'$ is then constructed from the completed $\alpha'$, and an interval that empties or a construction that fails proves the pin set infeasible.
The solver therefore determines a legal corridor for a requested set in $O(n+m)$ time if one exists.

\begin{algorithm}[t]
\caption{\rush{} scheduling algorithm.}
\label{alg:compiler}
\small
\begin{algorithmic}[1]
\Require circuit $\Circ$; placement $\placement$ onto an $n \times m$ data region $\region$
\Ensure certified physical-op stream
\State \amucall{alloc}$(\qbit, \placement(\qbit))$ for all $\qbit$
\State $\frontier \gets$ ready gates of $\Circ$
\While{$\frontier \neq \emptyset$}
  \State $\claimed \gets \emptyset$ \Comment{tiles claimed in this scan}
  \ForAll{$\gate \in \frontier$ by criticality}
    \State $\plans \gets{}$\amucall{try\_access}$(\gate, \claimed)$
    \If{$\plans \neq \bot$}
      \State $\bestplan \gets \arg\min_{\plan \in \plans} \tdone(\plan)$
      \State \amucall{commit}$(\bestplan)$
      \State $\claimed \gets \claimed \cup \tilesof{\bestplan}$
      \State advance $\frontier$
    \EndIf
  \EndFor
  \If{nothing was served}
    \amucall{commit}(\amucall{reshape}$(\frontier)$)
  \EndIf
\EndWhile
\State \Return the emitted stream, certified against $\Circ$ by replay
\end{algorithmic}
\end{algorithm}

\myparagraph{Reshaping}
When no gate of the current gate frontier can be served on the standing configuration $\Sigma$, \amucall{reshape} computes a resolving \isa{reconf} instruction in four steps (with a rotation fallback):

\emph{(1)~Pin.}
For the most critical blocked gate $\gate$, each candidate realization determines the corridor entries it requires, yielding a constant number of pin sets.

\emph{(2)~Solve.}
The corridor solver extends every pin set to a target configuration or refutes it.

\emph{(3)~Batch.}
Each surviving configuration greedily takes on further blocked gates of $\frontier$, growing its pin set one gate at a time and re-solving, keeping an extension only if every batched gate still routes.

\emph{(4)~Select.}
The candidate serving the most gates wins. Ties break toward the fewest moved patches, and the winner is lowered into a single \isa{reconf} instruction.
A reshape thus issues one $\mathcal{O}(n{+}m)$ solve per pin set and one re-solve per batched gate, so serving a gate frontier costs $\mathcal{O}(|\frontier|)$ solves.

Should no candidate route $\gate$, a fifth step flips one operand's orientation, appends a \isa{rotate}, and redoes the solve.

\section{The \rush{} Compiler}
\label{sec:compiler}

The \rush{} compiler transforms a logical Clifford+$T$ circuit into a valid stream of \rush{} ISA instructions that execute in minimal time on a given physical chip.

To do so, we operate on two levels.
Within one \emph{candidate}, a data region $\region$ paired with a placement $\placement$, a scheduling loop serves the circuit through the \amu{} (\secref{compiler:scheduling}).
Across candidates, a sweep compiles the candidate regions and placements the chip admits, prices each at every code distance the chip affords, and returns the Pareto-dominant operating points (\secref{compiler:sweep}).

\subsection{Workflow}
One compile realizes one candidate in four passes:

\emph{(1)~Provision.}
We lay out the dynamic lattice as the $(n{+}1) \times (m{+}1)$ tile grid $\grid$ that hosts the candidate's data region and its corridor (\subsecref{model}).

\emph{(2)~Layout.}
We carve the candidate's data region $\region$ of slots out of $\grid$ (\subsecref{model}), and a corridor, spreading any surplus area evenly into interior free lanes.

\emph{(3)~Placement.}
We bind each logical qubit to a slot of $\region$ under the candidate's placement strategy $\placement$ (\secref{compiler:placement}).

\emph{(4)~Scheduling.}
Finally, we serve the gates by lowering each to its \rush{} ISA realization. For this, we query the \amu{} for the ready gates the standing lattice can execute, commit the cheapest realization, and reshape the lattice on blocking gates (\secref{compiler:scheduling}).

\subsection{Placement}
\label{sec:compiler:placement}

The placement pass computes an initial embedding $E$ of the configuration $\Sigma$ (\subsecref{model}).
We implement three strategies:

With \emph{locality-first} placement, we weight qubit pairs by their \CNOT{} count and map the qubits by a snake-like traversal of the lattice onto the slots, allowing interacting qubits to stay adjacent so that their merges route through short corridors.
With \emph{density-first} placement, we pack qubits with the most ancilla-requiring gates at the border of free tiles left open by the layout pass.
With \emph{spreading} placement, we distribute the qubits uniformly over the lattice without ordering.

\subsection{Scheduling}
\label{sec:compiler:scheduling}

The scheduler issues \rush{} ISA instructions while minimizing the makespan of the instructions (Algorithm~\ref{alg:compiler}).
It first binds every qubit to its slot with \amucall{alloc} derived from placement and initializes the gate frontier $\frontier$ of ready gates.
Every iteration either serves gates or reshapes the lattice to make them feasible, until the gate frontier $\frontier$ empties:

Each scan iterates over the gate frontier by criticality, where a gate's criticality is the length of the longest chain of rounds that depends on it.
Per gate, \amucall{try\_access} returns the realizations avoiding the tiles $\claimed$ already claimed in this scan. The scheduler commits the earliest-finishing realization $\bestplan$, adds its tiles to $\claimed$, and advances $\frontier$.
A committed operation issues the moment its tiles free up, so operations overlap, and preparations pipeline beneath running gates.

When a scan cannot serve any gates with the current configuration, the scheduler calls \amucall{reshape} on $\frontier$ and commits the returned reconfiguration that serves as many gates in the frontier as possible~(\secref{amu}).

\subsection{Candidate Sweep}
\label{sec:compiler:sweep}

One run of the four passes produces one ISA stream for the candidate data region $\region$ and placement $\placement$.
Given a chip, the sweep computes the circuit's Pareto frontier in three steps:

\emph{(1)~Enumerate.}
We sample the regions $\region$ from two shape families: near-square regions on a geometric area ladder starting at the register minimum $|\Qset|$ and stretched regions with a short side of up to ten slots. Every region is paired with the three placements.
The free-area ceiling is bounded by the circuit's peak number of concurrent gates, each requiring a corridor of typical length $\sqrt{|\Qset|}$.

\emph{(2)~Compile and price.}
Each candidate is compiled once per cultivation occupancy class, and its stream is priced at every code distance $d$ for which the patch grid fits the chip. Each (stream, $d$) pair yields one operating point (physical qubits, $T_{\mathrm{succ}}$, \secref{methodology}).

\emph{(3)~Select.}
The sweep forms the circuit's frontier by keeping the Pareto-dominant feasible operating points across all candidates.
When the frontier is still improving at the free-area ceiling, the sweep raises this ceiling until the frontier flattens.

\section{Experimental Methodology}
\label{sec:methodology}

\myparagraph{Setup and benchmarks}
We implement \rush{} in ${\sim}9{,}300$ lines of Rust and evaluate it on \numBenchmarks{} representative Clifford+$T$ circuits from FTCircuitBench~\cite{ftbench} and MQT Bench~\cite{quetschlich2023mqtbench} at \numSuiteQubitsLo--\numSuiteQubitsHi{} qubits and up to $\numSuiteTGatesMaxK$k $T$ gates. A breakdown of the benchmark circuits is provided in Appendix~\ref{app:per-benchmark}.
All compilations run on a 384-core server with a one-hour timeout and 4\,GB of memory per circuit.

\myparagraph{Baselines}
We compare \rush{} against the compilers of LSQCA~\cite{kobori2025lsqca}, DASCOT~\cite{molavi2025dependency}, the Lattice Surgery Compiler (lsqecc, referred to as liblsqecc below)~\cite{watkins2024high,leblond2024realistic}, O3LS~\cite{zhu2026o3ls}, PureMagic~\cite{hofmeyr2025scheduling}, and TACO~\cite{wang2024taco}, the FLASQ bound~\cite{flasq}, and Litinski's analytic model~\cite{litinski2019game}.
DASCOT accepts only CX{+}$T$ circuits, so we compare it against \rush{} compiled on the \emph{same} CX{+}$T$ transpilations (matched $T$-count and measurement depth).

\myparagraph{Cost model}
We adopt FLASQ's cost model~\cite{flasq}.
A design occupies an $n{\times}m$ grid of distance-$d$ tiles at pitch $2(d{+}1)$~\cite{flasq}, one fabric of $N_{\mathrm{phys}} = 2\big(n(d{+}1)-1\big)\big(m(d{+}1)-1\big) - 1$ physical qubits.
Each tile suffers a per-cycle logical error 
\begin{equation*}
    p_{\mathrm{cyc}}(d) = c_{\mathrm{cyc}} (p_{\mathrm{th}}/p_{\mathrm{phys}})^{-(d+1)/2},
\end{equation*}
with $p_{\mathrm{th}}{=}10^{-2}$, $p_{\mathrm{phys}}{=}10^{-3}$, $c_{\mathrm{cyc}}{=}0.03$~\cite{flasq}.
A program with $M$ magic states and measurement depth $D$ runs for $L$ rounds, with $t_r(d) = t_r/(d\,t_{\mathrm{cyc}})$ the reaction latency in rounds ($t_{\mathrm{cyc}}{=}1\,\mu$s, $t_r{=}10\,\mu$s).
A single shot takes $W = t_{\mathrm{cyc}}\,d\,L$ and fails with the first-order budget
\(
\varepsilon = d\,p_{\mathrm{cyc}}(d)\,\big(V + M\,t_r(d)\big)
              + p_{\mathrm{mag}} M,
\)
where $V$ is the exposed tile-rounds of live logical information and $p_{\mathrm{mag}}$ the end-to-end error of one cultivated magic state.
Magic states are priced from the cultivation dataset shipped with FLASQ: one cultivated $T$ costs its expected space--time volume $v_{\mathrm{cult}}(d)$ in blocks, ${\sim}3.0$ at $d{=}13$ and ${\sim}2.0$ at $d{=}15$ (a \emph{block} is one tile occupied for one logical round) 
and carries an error of $p_{\mathrm{mag}} = 9.7\times10^{-7}$~\cite{gidney2024magic,flasq}.
Patch motion via walking qubits~\cite{mcewen2023relaxing} is priced at the resting rate. An optional penalty $p_{\mathrm{mv}}$ charges the tiles a patch crosses at $p_{\mathrm{cyc}}$ evaluated at $p_{\mathrm{phys}}{+}p_{\mathrm{mv}}$ (\tabref{sensitivity} tests up to $p_{\mathrm{mv}} = 2p_{\mathrm{phys}}$).

A shot succeeds with probability $P_{\mathrm{succ}} = e^{-\varepsilon}$, repeating until success takes $T_{\mathrm{succ}} = W / P_{\mathrm{succ}}$, and an operating point is \emph{feasible} iff $\varepsilon < 1$, our cutoff for the regime where FLASQ's first-order expansion is valid~\cite{flasq}.

\begin{table}[t]
    \centering\footnotesize
    \begin{tabularx}{\columnwidth}{l*{6}{>{\centering\arraybackslash}X}}
\toprule
system & min chip & $T_{\mathrm{succ}}$ & qubits & log.\ vol.\ & \mbox{qubit-s} & bound gap \\
\midrule
liblsqecc & \cellcolor[HTML]{FCF3F3}$1.7\times$ & \cellcolor[HTML]{F3CFD0}$9.2\times$ & \cellcolor[HTML]{FBF0F1}$2.0\times$ & \cellcolor[HTML]{F2CCCD}$10.7\times$ & \cellcolor[HTML]{F1C7C9}$13.2\times$ & \cellcolor[HTML]{F2CBCC}$55.0\times$ \\
LSQCA & \cellcolor[HTML]{FEFBFB}$1.2\times$ & \cellcolor[HTML]{F5D7D8}$6.4\times$ & \cellcolor[HTML]{FCF2F2}$1.8\times$ & \cellcolor[HTML]{F3D0D1}$8.8\times$ & \cellcolor[HTML]{F2CCCD}$10.8\times$ & \cellcolor[HTML]{F4D1D2}$40.2\times$ \\
O3LS & \cellcolor[HTML]{FCF2F2}$1.8\times$ & \cellcolor[HTML]{F5D5D6}$6.9\times$ & \cellcolor[HTML]{F9E7E7}$3.1\times$ & \cellcolor[HTML]{F5D5D6}$7.1\times$ & \cellcolor[HTML]{F4D4D4}$7.5\times$ & \cellcolor[HTML]{F5D7D8}$31.1\times$ \\
DASCOT & \cellcolor[HTML]{F8E4E4}$3.5\times$ & \cellcolor[HTML]{F9E6E6}$3.2\times$ & \cellcolor[HTML]{F6D9DA}$5.8\times$ & \cellcolor[HTML]{F7E0E1}$4.2\times$ & \cellcolor[HTML]{F6DBDC}$5.2\times$ & -- \\
Litinski & \cellcolor[HTML]{FCF4F5}$1.6\times$ & \cellcolor[HTML]{F7E0E1}$4.1\times$ & \cellcolor[HTML]{FBF0F1}$2.0\times$ & \cellcolor[HTML]{FBEFEF}$2.1\times$ & \cellcolor[HTML]{F8E3E3}$3.7\times$ & \cellcolor[HTML]{FAEAEA}$13.1\times$ \\
TACO & \cellcolor[HTML]{FDF9F9}$1.3\times$ & \cellcolor[HTML]{FBEEEE}$2.2\times$ & \cellcolor[HTML]{FDF5F5}$1.6\times$ & \cellcolor[HTML]{FAE9E9}$2.8\times$ & \cellcolor[HTML]{FAE9E9}$2.8\times$ & \cellcolor[HTML]{F9E9E9}$13.7\times$ \\
PureMagic & \cellcolor[HTML]{FBEEEE}$2.2\times$ & $0.9\times$ & \cellcolor[HTML]{FEFAFA}$1.3\times$ & \cellcolor[HTML]{FDF8F8}$1.4\times$ & $0.9\times$ & $4.7\times$ \\
\rowcolor[HTML]{E4EEF8}\textbf{RushHour} & $1.0\times$ & $1.0\times$ & $1.0\times$ & $1.0\times$ & $1.0\times$ & $4.8\times$ \\
\bottomrule
\end{tabularx}

    \caption{\textbf{Per-system summary.} Median ratios to \rush{} and geomean
        gap to the FLASQ~\cite{flasq} bound. DASCOT~\cite{molavi2025dependency} is measured on CX{+}$T$ inputs and compiles $\numDascotCoverage$ of $\numBenchmarks$ benchmarks.}
    \label{tab:summary}
\end{table}

\myparagraph{Metrics}
We report physical qubits, $T_{\mathrm{succ}}$, logical volume $nm\,L$, physical volume $N_{\mathrm{phys}} W$ (\emph{qubit-seconds}), the budget $\varepsilon$ by mechanism, and compile time.
Chip budgets are stated relative to each circuit's \emph{physical minimum}, the smallest chip on which the FLASQ bound itself runs the circuit. Layout headroom is \emph{tiles per qubit}. The \emph{dependency floor} is the circuit's critical path with \CNOT{}, $T$, and $S$ at one logical round each and Hadamards free.

\textit{Per-baseline pricing conventions and validation details are in Appendix~\ref{app:fairness}.}
\section{Evaluation}
\label{sec:evaluation}

\begin{figure}[t]
    \centering
    \includegraphics{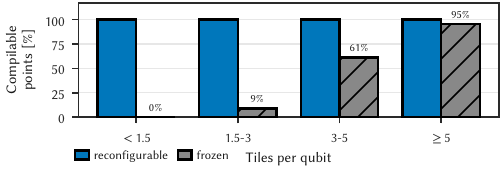}
    \caption{\textbf{Reconfiguration ablation} (\secref{eval:space}).
        \textit{Freezing removes a median $\numFrozenRemoved\%$ of operating points (up to $\numFrozenRemovedMax\%$).}}
    \label{fig:frozen}
\end{figure}

We evaluate whether \rush{} runs efficiently across constrained and abundant space (\textbf{RQ1}), reaches successful execution faster under matched space (\textbf{RQ2}), and spans the full space--time trade-off at the best resource efficiency (\textbf{RQ3}). We further assess its sensitivity to hardware characteristics, the cost of dynamism, the contribution of individual components, and compiler scalability (\textbf{RQ4}).
\tabref{summary} summarizes the main results.

\begin{figure*}[t]
    \centering
    \includegraphics{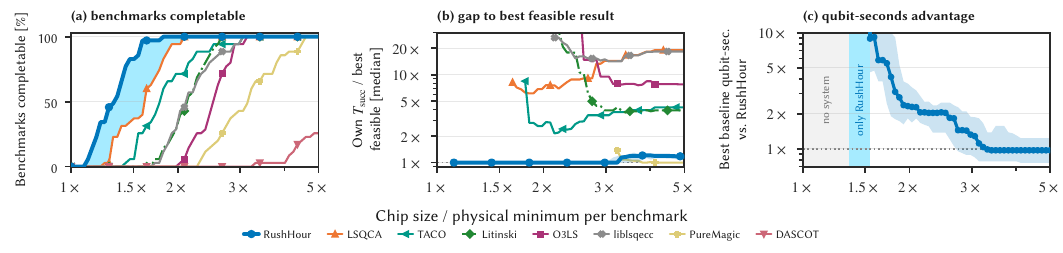}
    \caption{\textbf{Comparison across chip sizes} (\secref{eval:space}, \secref{eval:time}).
        (a)~Suite share completed, cyan: only \rush{} runs. (b)~Median gap to the
        best feasible $T_{\mathrm{succ}}$. Infeasible counts as unbounded, so
        curves exist where a system completes half the suite. Band: CI95.
        (c)~Qubit-seconds of the best feasible baseline over \rush{} (median,
        quartiles). \textit{\rush{} completes the suite at $\numSuiteBudgetRushhour\times$ the
            physical minimum, the best baseline at $\numSuiteBudgetBaseline\times$. Its gap to the best
            feasible result never exceeds ${\sim}\numMaxGapToBest\times$.}}
\label{fig:advantage}
\end{figure*}

\begin{figure*}[t]
    \centering
    \includegraphics{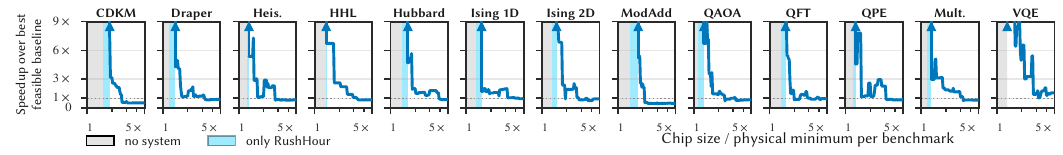}
    \caption{\textbf{Per-family speedup across chip budgets}
        (\secref{eval:time}). Per benchmark family, the median member's speedup
        over the best feasible baseline. Gray: no system runs, cyan: only
        \rush{} (speedup unbounded). The axis clips at $9\times$ and curves
        enter from above.}
    \label{fig:family-regimes}
\end{figure*}

\subsection{Space Efficiency}
\label{sec:eval:space}

\EQ{feasibility}{Feasibility}{Can \rush{} run circuits on small chips that are infeasible for static approaches?}
Comparing minimum feasible chips per circuit (\figref{advantage}a), we find each design needs a $\numMinChipLo\times$ (LSQCA) to $\numMinChipHi\times$ (DASCOT) larger chip than \rush{}, and at \rush{}'s minimum chip $\numSoleCount$ of $\numBenchmarks$ benchmarks run on no baseline.
On the tightest chips ($1.5\times$ the physical minimum), \rush{} runs $\numTightChipRushhour$ of $\numBenchmarks$ circuits where the best static design (LSQCA) runs $\numTightChipStatic$, and where any baseline is feasible at all, \rush{} reaches a successful shot a median $\numSpeedupTight\times$ sooner.

\EQ{spaceimprovement}{Space improvement}{How much space does \rush{} save at matched time?}
To match \rush{}'s time, every static design needs $\numQubitsMatchedLo$--$\numQubitsMatchedHi\times$ \rush{}'s physical qubits, even when each is taken at its fastest operating point (\tabref{summary}).

\begin{figure}[t]
    \centering
    \includegraphics{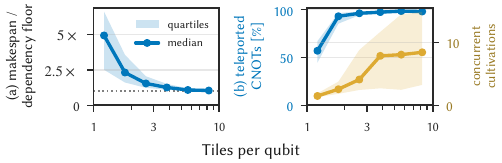}
    \caption{\textbf{Overhead hiding and concurrent work} (\secref{eval:time}).
        Against layout headroom. \textit{(a)~From five tiles per qubit the schedule
            sits within $\numFloorHeadroom\times$ of the floor. (b)~Teleportation serves $\numTeleportedShare\%$ of
            \CNOT{}s, and $\numCultivationsLo$--$\numCultivationsHi$ magic states cultivate in parallel.}}
    \label{fig:hiding}\label{fig:dynamism-cost}
\end{figure}

\EQ{reconfig}{Reconfiguration ablation}{Is dynamic reconfiguration the cause of this feasibility?}
We freeze dynamic reconfiguration by making any operating point whose schedule requires reconfiguration infeasible.
\figref{frozen} resolves the loss by layout headroom: below $1.5$ tiles per qubit, not a single frozen operating point compiles, and only $\numFrozenSurvivesTight\%$ survive at $1.5$--$3$ tiles.

\rqa{1}{\rush{} runs on a median $\numMinChipLo$--$\numMinChipHi\times$ smaller minimum feasible chip than every baseline, and $\numSoleShare\%$ of benchmarks execute where no static design does. Freezing reconfiguration removes a median $\numFrozenRemoved\%$ of operating points.}

\begin{figure*}[t]
    \centering
    \includegraphics{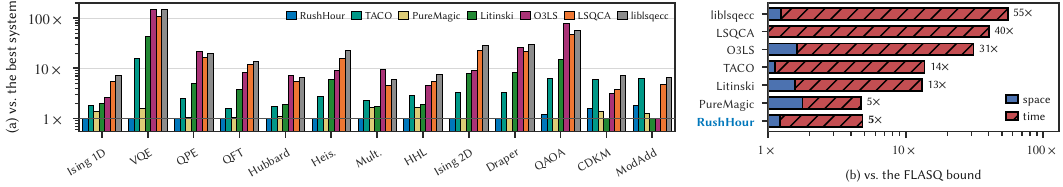}
    \caption{\textbf{Qubit-seconds volume comparison}
        (\secref{eval:tradeoff}). (a)~Relative qubit-seconds per benchmark family. (b)~Factored gap to the FLASQ bound. \textit{\rush{} is the cheapest system on \numFamiliesCheapest{} of \numFamilies{} families and within $2\%$ on two more, and lies only $\numBoundGapGeomean\times$ from the FLASQ bound.}}
    \label{fig:tradeoff-bars}
\end{figure*}

\begin{figure}[t]
    \centering
    \includegraphics{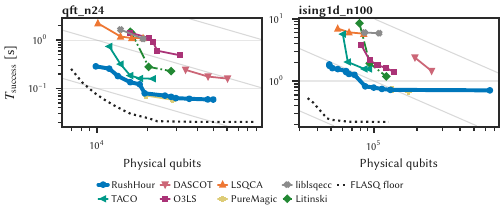}
    \caption{\textbf{Space--time frontiers} (\secref{eval:tradeoff}). \textit{Static designs are isolated points or short frontiers. One \rush{}
            architecture spans the full range and reaches chips no baseline fits.}}
    \label{fig:pareto}
\end{figure}

\subsection{Time Efficiency}
\label{sec:eval:time}

\EQ{matched}{Matched budgets}{At the same physical-qubit budget, how much sooner does \rush{} reach a successful shot?}
\figref{family-regimes} and \figref{advantage}b resolve the comparison across chip budgets, per benchmark family and in aggregate.
On space-constrained chips, \rush{} reaches a successful shot up to a median $\numSpeedupTight\times$ sooner than the best feasible alternative, and a median $\numMatchedLo$--$\numMatchedHi\times$ sooner than every design except PureMagic at matched qubits (\tabref{summary}).
With abundant space, \rush{} runs on par with PureMagic: on the spacious chips PureMagic compiles to, PureMagic runs a median $\numMatchedPuremagic\times$ \rush{}'s time at matched qubits (\tabref{summary}), while \rush{} matches its time at space parity and stays within ${\sim}\numMaxGapToBest\times$ of the best feasible result at every budget (\figref{advantage}b).

Where compilers lead in time, they are using a Pauli-based-computation (PBC) model in which Clifford absorption compresses the input circuit~\cite{hofmeyr2025scheduling,zhu2026o3ls,litinski2019game,bravyi2016trading}. O3LS leads by up to $\numPbcLeadOthreels\times$ on four serial adders, PureMagic by up to $\numPbcLeadPuremagic\times$ on \numPbcLeadCountPuremagic{} circuits at a large space overhead, and Litinski's blocks by up to $\numPbcLeadLitinski\times$ on five.
PureMagic's programs run at $\numPuremagicFloorRatio\times$ the circuit-model dependency floor, and replayed without PBC optimization, its schedules lose to \rush{} by a median $\numNoPbcReplay\times$.
\rush{} itself executes at $\numFloorHeadroom\times$ its circuit-model dependency floor once the layout has headroom, near-optimal within its model.

\EQ{hiding}{Overhead mitigation}{Can \rush{} successfully hide the overhead dynamism could induce?}
With layout headroom, the makespan sits within $\numFloorHeadroom\times$ of the dependency floor. Below two tiles per qubit, the median is $\numFloorTightLo\times$, rising to $\numFloorTightHi\times$ at the tightest layouts, against $\numFloorStaticPooled\times$ for the tightest static design (\figref{hiding}).
Dynamic rotations make Hadamards effectively free: even packed below two tiles per qubit, only one in ${\sim}\numDynamismHadamardsPerRotationTight$ Hadamards requires a physical rotation, and none at all beyond five (\tabref{dynamism}).
A median ${\sim}\numCultivationsLo$ cultivations run at once on packed layouts and ${\sim}\numCultivationsHi$ beyond three tiles per qubit, saturating at the parallelism the circuit's dependencies admit, which keeps magic states ready while the makespan holds the floor.

\rqa{2}{On space-constrained chips \rush{} reaches a successful shot a median $\numSpeedupBandLo$--$\numSpeedupBandHi\times$ sooner than the best feasible alternative, and its median gap to the best feasible result never exceeds ${\sim}\numMaxGapToBest\times$ at any budget. With layout headroom, its schedule sits within $\numFloorHeadroom\times$ of the dependency floor.}

\subsection{Trade-off: Spanning the Frontier}
\label{sec:eval:tradeoff}

\EQ{frontier}{Frontier width}{Does one \rush{} architecture span the space--time trade-off?}
We record the entire (physical qubits, $T_{\mathrm{succ}}$) frontier against every baseline (\figref{pareto}).
Per circuit, \rush{}'s frontier holds a median of $\numFrontierRushhour$ non-dominated operating points, while each static design, swept over the same code distances, contributes a median of two to five and never more than eight.

\EQ{volume}{Space--time volume}{How does \rush{} compare in space--time volume?}
On space-constrained chips, the cheapest feasible baseline pays a median $\numQubitSecondsAdvantage\times$ (up to over $\numQubitSecondsAdvantageMax\times$) \rush{}'s qubit-seconds at $2\times$ the physical minimum, and the advantage settles to parity only at the sweep's spacious end (\figref{advantage}c). Per family, \rush{} is the cheapest system on $\numFamiliesCheapest$ of $\numFamilies$ and on par at $\numFamiliesWithinTwoPercent$ families (\figref{tradeoff-bars}a).

\EQ{optimum}{Optimality}{How close to the idealized limit is \rush{}?}
Against the FLASQ bound (\figref{tradeoff-bars}), \rush{}'s geomean per-circuit gap is $\numBoundGapGeomean\times$ and never exceeds $\numBoundGapMax\times$. \rush{} is on par with PureMagic, which only runs on large chips. Every other bound-priced design is at $\numBoundGapOthersLo$--$\numBoundGapOthersHi\times$.
\rush{} and PureMagic reach parity by opposite routes (\figref{tradeoff-bars}b): \rush{} runs closer to the bound in space ($\numBoundSpaceRushhour\times$ against $\numBoundSpacePuremagic\times$), PureMagic closer in time ($\numBoundTimePuremagic\times$ against $\numBoundTimeRushhour\times$) through space-excessive PBC-based computations. LSQCA pays ${\sim}\numBoundTimeLsqca\times$ in time for dense but routing-starved layouts.

\EQ{metric}{Logical volume}{How does \rush{} score on logical volume?}
\tabref{summary} lists logical volume (tiles${\times}$rounds) alongside qubit-seconds, each system at its own best point under each metric: \rush{} leads on this metric too, with PureMagic nearest at $\numLogicalVolumeBest\times$ and every other design at $\numLogicalVolumeNext\times$ or more.

\begin{figure}[t]
    \centering
    \includegraphics{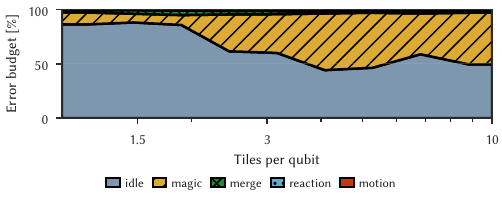}
    \caption{\textbf{Per-shot error budget} (\secref{eval:practicality}).
        Composition against layout headroom. \textit{Idle exposure peaks at $\numErrorIdleMax\%$ when
            tight. Magic-state error reaches parity near four tiles per qubit.
            Merges stay below $\numErrorMergeBound\%$, motion below $\numErrorMotionBound\%$.}}
    \label{fig:error-budget}
\end{figure}

\rqa{3}{One \rush{} architecture spans the entire space--time frontier with a median $\numFrontierRushhour$ operating points against the static designs' $\numFrontierBaselineLo$--$\numFrontierBaselineHi$, running $\numBoundGapGeomean\times$ from the FLASQ bound. On space-constrained chips it is the most resource-efficient by far, with the cheapest feasible alternative paying a median $\numQubitSecondsAdvantage\times$ (up to over $\numQubitSecondsAdvantageMax\times$) its qubit-seconds, and in spacious regimes it stays on par with the best existing design.}

\subsection{Practicality}
\label{sec:eval:practicality}

\begin{table}[t]
    \centering\scriptsize
    \setlength{\tabcolsep}{3.5pt}
    \begin{tabularx}{\columnwidth}{lll*{4}{>{\centering\arraybackslash}X}}
\toprule
$t_r$ & $p_{\mathrm{ph}}$ & $p_{\mathrm{mv}}$ & $T$ [$\times$] & qubit-s [$\times$] & gap [$\times$] & cover.\ [$\times$] \\
\midrule
$1\,\mu$s & $10^{-3}$ & 0 & 1.2--10.9 & 1.2--14.5 & \cellcolor[HTML]{EEBBBD}9.5 (11.8) & 1.8 (2.5) \\
\rowcolor[HTML]{E4EEF8}$10\,\mu$s & $10^{-3}$ & 0 & 0.9--9.2 & 0.9--13.2 & \cellcolor[HTML]{F3CFD0}4.8 (4.7) & 1.8 (2.1) \\
$100\,\mu$s & $10^{-3}$ & 0 & 0.8--5.4 & 1.3--7.1 & \cellcolor[HTML]{FAEBEB}2.0 (2.8) & 1.6 (2.1) \\
$1\,\mu$s & $5{\cdot}10^{-4}$ & 0 & 0.9--8.2 & 0.9--7.2 & \cellcolor[HTML]{EFBDBE}8.9 (9.0) & 2.3 (2.1) \\
$10\,\mu$s & $5{\cdot}10^{-4}$ & 0 & 0.7--7.7 & 0.8--7.4 & \cellcolor[HTML]{F3D0D1}4.7 (4.0) & 2.3 (2.1) \\
$100\,\mu$s & $5{\cdot}10^{-4}$ & 0 & 1.0--6.0 & 1.4--6.6 & \cellcolor[HTML]{FBEEEE}1.8 (2.9) & 1.6 (1.8) \\
$10\,\mu$s & $10^{-3}$ & $p_{\mathrm{ph}}$ & 0.9--9.2 & 0.9--13.2 & \cellcolor[HTML]{F3CFD0}4.9 (4.7) & 2.0 (2.1) \\
$10\,\mu$s & $10^{-3}$ & $2p_{\mathrm{ph}}$ & 0.9--8.9 & 0.9--13.1 & \cellcolor[HTML]{F3CECF}5.2 (4.7) & 2.4 (2.1) \\
\multicolumn{3}{l}{2 rounds/slide (proven rate)} & 0.9--9.2 & 0.9--13.2 & \cellcolor[HTML]{F3CFD0}4.9 (4.7) & 1.8 (2.1) \\
\bottomrule
\end{tabularx}

    \caption{\textbf{Hardware sensitivity.} Matched $T_{\mathrm{succ}}$,
        qubit-seconds, bound gap, and coverage across hardware settings. Gap and coverage list \rush{} with the best baseline in parentheses.}
    \label{tab:sensitivity}
\end{table}

\EQ{cost}{Cost of dynamism}{What does the dynamism itself cost, in error budget and patch motion?}
At the tightest layouts, magic-state error takes $\numErrorMagicTight\%$ of the budget and merges $\numErrorMergeTight\%$ behind the dominant idle exposure. With headroom, magic-state error grows toward parity with more tiles per qubit (\figref{error-budget}).
Reshapes are issued only where a gate cannot otherwise be served: even packed below two tiles per qubit they stay at one per ${\sim}\numDynamismGatesPerReshapeTight$ ancilla-served gates, patch motion holds $\numDynamismMotionShareTight\%$ of occupied space--time, and beyond five tiles every mechanism drops to exactly zero (\tabref{dynamism}).

\begin{table}[t]
    \centering\footnotesize
    \begin{tabularx}{\columnwidth}{l*{3}{>{\centering\arraybackslash}X}}
\toprule
tiles per qubit & $<2$ & $2$--$5$ & $\geq 5$ \\
\midrule
reshapes per anc.-served gate & 1/140 & 1/11k & $0$ \\
rotations per Hadamard & 1/1,035 & 1/102k & $0$ \\
patch motion [\% space-time] & $0.14\%$ & $<0.01\%$ & $0$ \\
\bottomrule
\end{tabularx}

    \caption{\textbf{Cost of dynamism} (\secref{eval:practicality}).}
    \label{tab:dynamism}
\end{table}

\begin{table}[t]
    \centering\footnotesize
    \begin{tabularx}{\columnwidth}{l*{2}{>{\centering\arraybackslash}X}}
\toprule
removed policy & $T_{\mathrm{succ}}$ vs.\ full & reshapes vs.\ full \\
\midrule
teleportation & \cellcolor[HTML]{FDF5F5}$1.05\times$ ($1.6\times$) & \cellcolor[HTML]{E79EA0}$1.35\times$ ($613\times$) \\
lookahead selection & \cellcolor[HTML]{FEFCFC}$1.00\times$ ($1.2\times$) & \cellcolor[HTML]{F3CED0}$1.10\times$ ($10\times$) \\
placement portfolio & \cellcolor[HTML]{FEF9F9}$1.04\times$ ($1.3\times$) & -- \\
\bottomrule
\end{tabularx}

    \caption{\textbf{Policy ablations} (\secref{eval:practicality}). Median
        (worst-case) ratio to the full compiler in $T_{\mathrm{succ}}$ at matched
        qubits and in reshapes.}
    \label{tab:policies}
\end{table}

\EQ{sensitivity}{Hardware sensitivity}{Does \rush{}'s advantage survive different reaction times, error rates, and motion costs?}
We re-price every baseline, the bound, and \rush{} at reaction times of $1$--$100\,\mu$s, physical error rates of $10^{-3}$ and $5{\cdot}10^{-4}$~\cite{acharya2025below}, and motion penalties up to $2p_{\mathrm{phys}}$ (\tabref{sensitivity}).
At $t_r{=}1\,\mu$s routing dominates every schedule and the gaps widen (matched speedups $\numSensitivityFastMatchedLo$--$\numSensitivityFastMatchedHi\times$). At $t_r{=}100\,\mu$s the reaction chain dominates and the field converges toward parity (per-system medians $\numSensitivitySlowMatchedLo$--$\numSensitivitySlowMatchedHi\times$).
Walking is priced at the resting rate $p_{\mathrm{mv}}{=}0$ (\secref{background}).
Even so, at $p_{\mathrm{mv}}{=}p_{\mathrm{phys}}$ the time and qubit-second advantages are unchanged and \rush{} still covers the suite on the smallest chip budget. Doubling the penalty widens the bound gap from $\numSensitivityBoundGap\times$ to $\numSensitivityBoundGapDoubleMotion\times$.
At the two-round walking rate, results move by at most $\numSensitivityWalkingShift\%$~\cite{mcewen2023relaxing}.

\EQ{policies}{Compiler policies}{Which compiler policies carry the results?}
Removing policies one at a time from the otherwise identical compiler (\tabref{policies}) shows that teleportation carries the schedule: removing it costs up to $\numPolicyTeleportTimeMax\times$ in $T_{\mathrm{succ}}$ on chain-dominated circuits and up to $\numPolicyTeleportReshapeMax\times$ in reshapes, while removing lookahead costs up to $\numPolicyLookaheadReshapeMax\times$ in reshapes.
The placement portfolio beats committing to the wrong single member by up to $\numPolicyPlacementMax\times$ (\tabref{policies}).

\EQ{compiletime}{Compile time}{Does the compiler scale to full circuits and full candidate sweeps?}
\figref{compile-cdf} shows the wall-clock time of every compile of every sweep.
A circuit's full candidate sweep is a median of $\numSweepCompiles$ independent compiles, resulting in a median of $\numSweepCpuMinutes$ CPU-minutes in total, entirely parallelizable.
DASCOT, which searches for dependency-optimal schedules, compiles only $\numDascotCoverage$ of the $\numBenchmarks$ circuits within the same one-hour budget~\cite{molavi2025dependency}.
This indicates that \rush{}'s dynamic approach admits a simple compiler that nevertheless matches or beats far more expensive searches such as DASCOT's.

\rqa{4}{\rush{}'s dynamism is effectively free. \rush{}'s advantage widens at faster reaction times. \rush{} compiles a median candidate in $\numCompileMedianSeconds\,$s.}

\section{Related Work}
\label{sec:related}

\myparagraph{Resource-estimation models}
FLASQ~\cite{flasq} models an idealized machine that allocates ancilla space fluidly and prices gates at optimistic resource usage. Litinski's model~\cite{litinski2019game} estimates runtime for a static PBC-based \LS{} architecture. LeBlond and Bennink compare that Clifford-eliminating family against direct Clifford+$T$ compilation on Hamiltonian-simulation workloads~\cite{leblond2026comparison}. We use FLASQ as the idealized limit (\secref{methodology}) and show a constructive compiler operating within $\numBoundGapGeomean\times$ of it across constrained and spacious regimes.

\myparagraph{Lattice-surgery compilers on fixed layouts}
O3LS~\cite{zhu2026o3ls}, DASCOT~\cite{molavi2025dependency},
LSQCA~\cite{kobori2025lsqca}, TopoLS~\cite{zhou2026topols}, and
liblsqecc~\cite{watkins2024high,leblond2024realistic} optimize placement, routing, and
scheduling against a floorplan that is fixed before execution. O3LS
searches over candidate layouts and LSQCA relocates qubits within its
fixed memory--compute floorplan, as earlier approaches do~\cite{hua2021autobraid, silva2024multi, kan2025sparo}. PureMagic~\cite{hofmeyr2025scheduling} reassigns which fixed tiles produce magic states or serve as routing per step, and TACO~\cite{wang2024taco} co-designs a Clifford-eliminating transpiler with a fixed layout tailored to the result. 
Ecmas+~\cite{zhu2025ecmasplus} customizes chip initialization per circuit and reaches depth-optimal schedules in its sufficient-resources regime by assuming magic states arrive freely at the data patches. \rush{} pays for their cultivation on-lattice and, given comparable layout headroom, still runs within $\numFloorHeadroom\times$ of its dependency floor.
Recent work orthogonally improves static designs using bounded-depth space--time routing~\cite{hamada2026bounded} or workload-aware floorplan selection at compile time~\cite{ghosh2026workload}.

\begin{figure}[t]
    \centering
    \includegraphics{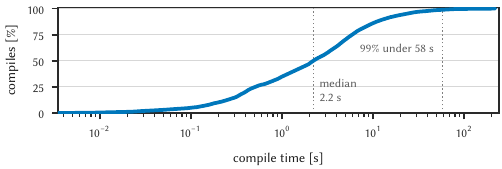}
    \caption{\textbf{Compile wall-clock time} (\secref{eval:practicality}) over
    every compilation. \textit{Half finish within $\numCompileMedianSeconds$\,s and
    $99\%$ within a minute.}}
    \label{fig:compile-cdf}
\end{figure}

\myparagraph{Mobile logical qubits}
Runtime movement of logical patches is an emerging direction~\cite{sharma2025spacetime,herzog2025movable,mcewen2023relaxing}. Sharma and Murali~\cite{sharma2025spacetime} densify early-FT layouts by moving data patches one tile per logical cycle. LSQCA moves qubits between memory and compute zones at high cost (${\sim}\numBoundTimeLsqca\times$ in time, \secref{eval:tradeoff})~\cite{kobori2025lsqca}. Herzog et al.\ exploit the fact that, on the color code, an \LS{} \CNOT{} can optionally teleport a qubit as it executes~\cite{herzog2025movable}.
These approaches move individual data patches, whereas \rush{} reconfigures free ancilla space across the lattice in constant time.

\myparagraph{Walking qubits}
Walking qubits have been used to handle leakage errors~\cite{mcewen2023relaxing} and dense memory architectures~\cite{low2026denser,gidney2025yoked}.
\rush{} is the first constructive approach to systematically use walking qubits for more resource-efficient space--time trade-offs.

\section{Conclusion}
\label{sec:conclusion}
We present \rush{}, a dynamic \LS{} architecture that enables resource-efficient execution in both constrained and spacious regimes. \rush{} dynamically reconfigures ancilla space, allocates resource states locally and just-in-time, and adapts qubit placement and orientation to span the space--time trade-off in a single framework.
The \rush{} ISA formalizes this execution model and ensures valid reconfiguration and allocation, the Lattice Management Unit (LMU) manages the evolving lattice, and the \rush{} compiler efficiently pipelines operations.

\rush{} enables running on previously infeasible chips with $\numSoleShare\%$ of benchmarks executing on no baseline at all, and every existing design needing a $\numMinChipLo$--$\numMinChipHi\times$ larger chip. On space-constrained chips where alternatives run at all, \rush{} is both faster and leaner, reaching a successful shot up to a median $\numSpeedupTight\times$ sooner and paying a median $\numQubitSecondsAdvantage\times$ (up to over $\numQubitSecondsAdvantageMax\times$) less in qubit-seconds, while on large chips it performs on par with the state of the art.
These results indicate that \rush{} successfully implements dynamic \LS.

\ifdefined\ready

\else
\myparagraph{Artifact} \rush{} will be publicly available for artifact evaluation, along with the entire experimental setup.
\fi

\ifdefined\ready
\section*{Acknowledgments}
We thank William Huggins for helpful discussions.
This work was funded by the Bavarian State Ministry of
Science and the Arts as part of the Munich Quantum Valley
(MQV) initiative, grant number 6090181.
\fi

\bibliographystyle{ACM-Reference-Format}
\bibliography{refs}

\ifdefined\ready
\appendix
\section{Fairness and Validation Details}
\label{app:fairness}

\myparagraph{TACO}
TACO's transpiler runs as released. The published design is a single compute block ($1.5n{+}4$ tiles), which we sweep over one, two, and four blocks ($1.5n{+}4k$) and report at its best per-circuit configuration.
TACO is rescheduled based on per-distance producer occupancy, so, like \rush{}, it pays the cultivation price in both schedule time and the error budget.

\myparagraph{LSQCA}
LSQCA follows its paper's closed-form geometries, reproducing the per-block tile and round counts and the floorplan formulas it reports.
It runs over its published configuration family, point- and line-SAM floorplans over the published bank counts (up to four for line-SAM, two for point-SAM), with its paper's native 15-beat magic-state factories.
Its exposure is footprint${\times}$makespan, an upper bound its tight memory density keeps close.

\myparagraph{DASCOT}
DASCOT runs as released on its CX{+}$T$ inputs. Its schedules are lowered into the \rush{} ISA and priced through the same replay, making DASCOT a second independent producer of ISA programs.

\myparagraph{liblsqecc}
liblsqecc~\cite{watkins2024high} runs as released, in its latest version with improvements described by LeBlond et al.~\cite{leblond2024realistic}. It runs in its non-clogging compact layout at its best per-circuit configuration, and its exposure is priced on the active volume its slicer reports.

\myparagraph{O3LS}
O3LS ships no artifact, so we reimplement it. The reimplementation reproduces its published reductions over its paper's SPC baseline on its own benchmark suite (time steps within $1\%$ on the standard layout, footprint within $2$ percentage points).

\myparagraph{Reaction latency}
Reaction latency is charged uniformly by what each schedule resolves. For \rush{}, PureMagic, and DASCOT we replay the schedules and measure the reaction-exact path, for which $t_r(d)\,D$ is a lower bound. TACO stalls in schedule, so its published rounds already carry the wait. The reaction-blind designs (Litinski, LSQCA, liblsqecc, O3LS) are charged the serial tail, $L = L_0 + t_r(d)\,D$ over their scheduled makespan $L_0$~\cite{flasq,fowler2012time,battistel2023realtime}.

\myparagraph{PureMagic}
PureMagic runs from its released artifact over its published Pauli-product weight limits $\omega \in \{1, \infty\}$, with cultivation at the same per-distance occupancy \rush{} pays.
Its schedules are replayed under the same reaction-exact critical-path rule \rush{} applies to itself: a decode lag of $t_r(d)$ rounds is charged on every dependency edge that leaves a $T$ merge's target or magic patch, and only there, plus the final $T$'s trailing decode. Every product keeps its artifact-scheduled start as a floor, preserving PureMagic's ordering, routing, and overlaps, and stall idle is charged exactly as \rush{} charges itself.

\myparagraph{Litinski blocks}
The Litinski blocks follow their paper's closed-form geometries. They are an analytic resource model.
Their exposure charges the full block during scheduled steps and only the data patches during the additive reaction tail, where no merge is in flight.

\myparagraph{Pricing conventions}
Every baseline pays the cultivation price in the error budget and the amortized footprint, while keeping its published gate ordering.
\rush{}'s exposure is its schedule's occupied tile-rounds (idle, merge, motion, and cultivation), the same basis as each baseline whose artifact reports occupancy. A baseline without such reporting pays footprint${\times}$makespan.
The feasibility cutoff is applied uniformly. As a sensitivity check, we also re-admit every baseline operating point that fails the cutoff, pricing it through the same formula $T_{\mathrm{succ}} = W e^{\varepsilon}$. Even then, the best baseline remains a median $\numEpsilonGrantMedian\times$ behind on the chips where \rush{} is otherwise the only feasible system.
We re-ran the comparison under each alternative convention we could construct, including TACO's own supply accounting and five further assumptions in its favor, and \rush{}'s median advantage at matched qubits never fell below $\numTacoFairnessFloor\times$.

\section{Benchmark Suite}
\label{app:per-benchmark}

\begin{table}[!htb]
    \centering\footnotesize
    \begin{tabular*}{\columnwidth}{@{\extracolsep{\fill}}lll@{}}
\toprule
benchmark & $n$ & $T$-count \\
\midrule
CDKM ripple-carry adder & 16, 32, 64, 80, 96 & 56--376 \\
Draper QFT adder & 24, 48, 64, 80 & 20,965--181,150 \\
Heisenberg, 1D chain & 36, 64 & 73,502--132,162 \\
HHL & 12 & 108,893 \\
Fermi--Hubbard, 1D chain & 18, 72 & 15,495--58,760 \\
Ising, 1D chain & 36, 100 & 15,300--42,500 \\
Ising, 2D lattice & 64, 100 & 40,640--63,500 \\
Modular adder & 32, 64, 80, 96 & 120--376 \\
QAOA & 24, 64, 80 & 13,563--139,228 \\
QFT & 24, 64, 80, 96 & 9,491--46,787 \\
QPE & 32, 96 & 29,871--121,451 \\
Multiplier & 16, 40 & 19,756--186,197 \\
VQE & 32, 96 & 7,980--31,710 \\
\bottomrule
\end{tabular*}

    \caption{\textbf{Benchmark circuits.} The evaluated suite.}
    \label{tab:benchmarks}
\end{table}

\tabref{benchmarks} shows the benchmark circuits used to evaluate \rush{}.

\clearpage

\fi

\end{document}